\newtheorem{theorem}{Theorem}
\newtheorem{proposition}{Proposition}
\theoremstyle{remark}
\DeclarePairedDelimiter\floor{\lfloor}{\rfloor}
\DeclarePairedDelimiter\ceil{\lceil}{\rceil}
\begin{document}

\title{Improved Credit Bounds for the Credit-Based Shaper in Time-Sensitive Networking}


\author{\IEEEauthorblockN{Ehsan Mohammadpour, Eleni Stai, Jean-Yves Le Boudec\\}
	\IEEEauthorblockA{\'Ecole Polytechnique F\'ed\'erale de Lausanne, Switzerland\\
		$\{$firstname.lastname$\}$@epfl.ch}}


\maketitle

\begin{abstract}
In Time-Sensitive Networking (TSN), it is important to formally prove per flow latency and backlog bounds. To this end, recent works apply network calculus and obtain latency bounds from service curves. The latency component of such service curves is directly derived from upper bounds on the values of the credit counters used by the Credit-Based Shaper (CBS), an essential building-block of TSN. In this paper, 
we derive and formally prove credit upper bounds for CBS, which improve on existing bounds. 

\end{abstract} 

\begin{IEEEkeywords}
Time-Sensitive Networking (TSN); Audio-Video Bridging (AVB); Credit-Based Shaper (CBS); credit bounds;
\end{IEEEkeywords}
\IEEEpeerreviewmaketitle

\section{Introduction}
\label{sec:intro}

Time-Sensitive Networking (TSN) is an IEEE 802.1 working group that defines networking mechanisms for supporting real-time data flows with latency guarantees and zero packet loss \cite{_time-sensitive_task}. TSN targets applications in avionics, automotive, industrial networks, etc., where data loss or latency violation causes catastrophic damage. One of the main building-blocks of TSN is the Credit-Based Shaper (CBS), 
 which provides rate allocation for a number of priority classes, called Audio-Video Bridging (AVB) classes, using a credit mechanism (Section \ref{sec:sm}). 
 In recent studies \cite{mohammadpour_latency_2018,daigmorte_modelling_2018,zhao_timing_analysis_2018}, latency and backlog bounds in TSN are derived by using network calculus and service curve characterizations of CBS \cite{le_boudec_network_2001}. The latency parameters of such service curves is directly derived from credit upper bounds, i.e. bounds on the values of the credit counters of CBS. Two sets of results were published for credit upper bounds. The former, by J. De Azua and M. Boyer, (``J-bounds", \cite{azua_complete_2014}) applies only to the case with two AVB classes and its proof is not fully described. The latter set of bounds, by H. Daigmorte et al (``H-bounds", \cite{daigmorte_modelling_2018}), applies to any number of AVB classes and is formally proven. For the top priority AVB class, J- and H-bounds are identical. For the second priority class, J-bounds are generally smaller than H-bounds, but not always. For third and lower priority classes, only H-bounds are available.

In this paper, we derive and formally prove credit upper bounds for CBS with any number of AVB classes, which improve on both sets of existing bounds. Specifically, our bound is the same as J- and H-bounds for the top priority AVB class. For the second priority class, our bound is lower than the H-bound and is lower than or equal to the J-bound, depending on maximum packet lengths. For all other priority classes, our bounds are lower than the only available bounds, namely the H-bounds.
Moreover, we formally show that our credit bounds are tight for the two classes of highest priority, which is the first tightness result presented in the literature. In Section \ref{sec:eval}, we perform numerical evaluations and show that the improvement in latency guarantees is significant.




\section{System Model \& Existing Credit Bounds}
\label{sec:sm}
We assume a TSN scheduler with the following elements: 1) A set of queues representing a set of classes including, in decreasing priority, one Control-Data Traffic (CDT) class, $p$ AVB classes 1, 2, 3, ..., $p$, and a set of Best Effort (BE) classes; 
2) A set of gates, one per queue, such that if a gate is closed, the corresponding queue cannot transmit. 
A Gate-Control List (GCL) contains the information of the opening/closing times of gates. 
Moreover, there are several integration policies that determine the preemption or non-preemption of CDT over the rest of the classes \cite{daigmorte_modelling_2018}. The analysis in this paper is valid for all integration policies; 
3) A set of CBSs, one per AVB queue, to control the allocated rate of each AVB class. The CBS of an AVB class $i$ has two parameters: the \emph{idle slope}, $I_i>0$, and the \emph{send slope}, $S_i<0$. Note that $I_i-S_i=c$, where $c$ is the line transmission rate. The idle slope is interpreted as the rate guaranteed to class $i$ and therefore, the condition $\sum_{i=1}^{p}I_i<c$ is assumed to hold. 
%
Packets are scheduled according to the following rules (we repeat here the description in \cite{daigmorte_modelling_2018}):\\
\textbf{R1:} If the transmission line is free, the scheduler transmits a packet of the highest priority class that satisfies all the conditions: 1) it has a positive backlog; 2) its gate is open; 3) it has a nonnegative credit if it is an AVB class. \\
\textbf{R2:} The credit of the AVB class $i$ reduces linearly with rate the send slope, $S_i$, if $i$ transmits.\\
\textbf{R3:} The credit of the AVB class $i$ increases linearly with rate $I_i$, when the following conditions hold simultaneously for class $i$: 1) its gate is open; 2) it has a positive backlog; and 3) other AVB or BE classes are transmitting.\\
\textbf{R4:} The credit of an AVB class remains constant, if the corresponding gate is closed, and during any additional overhead in the case of preemption mode for CDT.\\
\textbf{R5:} When class $i$ has a positive credit and its backlog becomes zero, the credit is set to zero; this is the credit reset. If the credit is negative and the backlog becomes zero, the credit increases with rate $I_i$ until the zero value.\\
Let $V_i(t)$ denote the value of the credit counter for AVB class $i$ at time $t\geq 0$. We assume that the system is idle at time $0$ and $V_i(0)=0$. The function $V_i()$ may take positive or negative values and is continuous, except at  credit reset times, which, by R5, may occur only when the queue of class $i$ becomes empty. At all other times it is linearly increasing, decreasing or constant.
The J-bounds in \cite{azua_complete_2014} apply to the case $p=2$, as:
\begin{align}
\label{eq:boyer_avb_A}&  V_1(t) \leq \bar{L} \frac{I_1}{c}:= V^{\text{max},\mathrm{J}}_1, \\
\label{eq:boyer_avb_B}&  V_2(t) \leq \frac{I_2}{c} \Big(L_{BE}+L_1+ \bar{L} \frac{I_1}{-S_1}\Big) := V^{\text{max},\mathrm{J}}_2,
\end{align}
\noindent where $L_i$ and $L_{BE}$ are maximum packet lengths of AVB class $i$ and BE, and $\bar{L} = \max(L_2,L_{BE})$ 
The proof of Eq. \eqref{eq:boyer_avb_B} in \cite{azua_complete_2014} is not completely formal (in contrast, we provide formal proofs for our improved credit bounds).

The H-bounds in \cite{daigmorte_modelling_2018} apply to any value of $p$ 
 and give, for $i=1,...,p$: 
\begin{align}\label{eq:sum_low_bound}
L_i \frac{S_i}{c} \leq V_i(t) \leq \frac{\bar{L}_i}{c} \sum_{j=1}^{i}I_j - \sum_{j=1}^{i-1} S_j \frac{L_j}{c} :=V^{\text{max,\mathrm{H}}}_i,
\end{align}
\noindent where $\bar{L}_i = \max(L_{BE}, L_{>i})$, and $ L_{>i}$ is the maximum packet length of the classes having less priority than class $i$.
These bounds are formally proven in \cite{daigmorte_modelling_2018}.

When $p=2$ we can compare the J- and H-bounds. For class 1 the bounds are identical, i.e., $V^{\text{max,\mathrm{J}}}_1=V^{\text{max,\mathrm{H}}}_1$. For class $2$, whenever $L_2<L_{BE}$, which is often but not always assumed, we have $V^{\text{max},\mathrm{J}}_2<V^{\text{max},\mathrm{H}}_2$; otherwise it may happen that $V^{\text{max},\mathrm{J}}_2>V^{\text{max},\mathrm{H}}_2$ for some values of the system parameters.

We use the following result, proven as Theorem 7 in \cite{daigmorte_modelling_2018}; for $i=1...p$:
\begin{equation}\label{eq:sum_up_bound}
 \sum_{j=1}^{i}V_j(t) \leq \frac{\bar{L}_i}{c} \sum_{j=1}^{i}I_j.
\end{equation}

\section{Improved Upper Bound on the Credit of an arbitrary AVB class}
\label{sec:bound}
\begin{theorem}[Improved Credit Bounds] \label{thm:bound}
The credit of an AVB class $i$, $V_i(t)$, is upper bounded, $\forall t \geq 0$, by:
\begin{align}\label{eq:defcreditUp}
V_i^{\mathrm{max}}=\frac{I_i}{c(c-\sum_{j=1}^{i-1}I_j)}\Bigg(c \bar{L}_i -\sum_{j=1}^{i-1} S_j L_j\Bigg).
\end{align}
\end{theorem}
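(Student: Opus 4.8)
The plan is to bound $V_i$ at a time $t^\ast$ at which it attains (or approaches) its supremum, assuming $V_i(t^\ast)>0$ since otherwise the claim is trivial. First I would look backwards from $t^\ast$ and isolate the \emph{final ascent}: let $s$ be the start of the last maximal interval ending at $t^\ast$ during which the server transmits only packets of classes of priority at least $i$ (i.e.\ CDT and AVB classes $1,\dots,i$), together with the single lower-priority packet (AVB $>i$ or BE), or idle instant, that immediately precedes it. The structural fact I would establish is that at the instant $s$ \emph{no} class of priority $\ge i$ is eligible --- which is precisely why the server idles or serves a lower-priority packet there --- so by R1 and R5 every such class is either empty or strictly negative in credit, giving $V_j(s)\le 0$ for all $j\le i$. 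In particular $V_i(s)\le 0$ and $W(s):=\sum_{j=1}^{i-1}V_j(s)\le 0$.

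Next I would decompose the server activity on $(s,t^\ast]$ into an initial blocking segment of duration $b\le \bar L_i/c$ (one lower-priority packet, by non-preemption of an in-progress transmission), the total time $\beta$ spent serving the higher classes $1,\dots,i-1$, and the total time $\alpha$ spent serving class $i$ itself; intervals on which the credits are frozen (class-$i$ gate closed, CDT overhead) carry no credit change and are set aside. Since $V_i$ rises at rate $I_i$ only while some other eligible class transmits, falls at rate $|S_i|$ while class $i$ transmits, and its resets only lower it, this yields $V_i(t^\ast)\le V_i(s)+I_i(b+\beta)-|S_i|\alpha\le I_i(b+\beta)-|S_i|\alpha$, using $V_i(s)\le 0$.

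The heart of the argument is a per-class credit balance for each $j<i$ over $[s,t^\ast]$: there $V_j$ rises at rate at most $I_j$ while class $j$ is not transmitting and falls at rate $|S_j|$ while it is, so its net increase is at most $I_j(t^\ast-s)-c\,d_j$, where $d_j$ is the transmission time of class $j$ and I use $I_j+|S_j|=c$. Rearranging gives $c\,d_j\le I_j(t^\ast-s)-V_j(t^\ast)+V_j(s)$; summing over $j<i$, using $\sum_{j<i}d_j=\beta$, the lower bound $V_j(t^\ast)\ge S_jL_j/c$ from \eqref{eq:sum_low_bound}, and $W(s)\le 0$, yields
\begin{equation*}
c\,\beta\le \Big(\textstyle\sum_{j=1}^{i-1}I_j\Big)(t^\ast-s)-\tfrac1c\sum_{j=1}^{i-1}S_jL_j .
\end{equation*}
Writing $\sigma=\sum_{j=1}^{i-1}I_j$ and substituting $t^\ast-s=b+\beta+\alpha$ then bounds $\beta$ in terms of $b$ and $\alpha$; the total-credit bound \eqref{eq:sum_up_bound} provides a useful cross-check on the aggregate state at the boundary $s$.

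Finally I would substitute this bound on $\beta$ into $V_i(t^\ast)\le I_i(b+\beta)-|S_i|\alpha$. The coefficient of $\alpha$ comes out to be $\frac{I_i\sigma}{c-\sigma}-|S_i|$, which is $\le 0$ exactly because $\sigma+I_i=\sum_{j=1}^{i}I_j\le c$; hence the worst case is $\alpha=0$, i.e.\ class $i$ transmitting during the ascent only lowers its peak credit. Setting $\alpha=0$ and maximizing over $b\le\bar L_i/c$ collapses the expression, via $1+\frac{\sigma}{c-\sigma}=\frac{c}{c-\sigma}$, precisely to $V_i^{\mathrm{max}}$. I expect the main obstacle to be the first step: rigorously defining $s$ and proving $V_j(s)\le 0$ for all $j\le i$, in particular excluding the competing scenario in which the ascent is fed by higher classes still carrying large positive credit. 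This is where taking $s$ at the \emph{start} of the blocking/idle episode (rather than, say, the last zero-crossing of $V_i$) is essential, and where resets of $V_i$ and credit-frozen intervals must be handled so that the clean balance above remains valid across all integration policies.
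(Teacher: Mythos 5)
Your accounting machinery is essentially the same as the paper's: the per-class credit balance $V_j(t^*)-V_j(s)\le I_j(t^*-s)-c\,d_j$, the elimination of the higher-class service time $\beta$, the observation that the coefficient of $\alpha$ is nonpositive precisely because $\sum_{j\le i}I_j\le c$ (so class-$i$ transmissions can only lower the peak), and the maximization over $b\le \bar{L}_i/c$ are exactly the steps that drive the paper's computation, and your algebra does collapse to $V_i^{\mathrm{max}}$. The gap is the step you yourself flagged as the main obstacle: the claim that $V_j(s)\le 0$ for \emph{every} $j\le i$ at the start $s$ of the blocking/idle episode. In the model the theorem is stated for, this claim is false. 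By R1, a class passed over by the scheduler at $s$ is ineligible because it is empty, or has negative credit, \emph{or has a closed gate}; in that last case R4 freezes its credit at whatever value it had when the gate closed, which can be strictly positive (credit accrued under R3 while the gate was open and others transmitted, then frozen by the GCL). So a higher-priority class $j<i$ --- or class $i$ itself --- can sit at your $s$ with positive frozen credit, $W(s)=\sum_{j<i}V_j(s)>0$ is possible, and your bound on $\beta$, hence on $V_i(t^*)$, degrades by exactly that surplus. The competing scenario you wanted to exclude (``the ascent fed by higher classes still carrying large positive credit'') is therefore genuinely realizable once gates are in play.

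The natural repair also does not work with your choice of $s$: replacing $W(s)\le 0$ by the aggregate bound \eqref{eq:sum_up_bound} only controls $W(s)+V_i(s)$, so you would additionally need control of $V_i(s)$, which your anchoring does not provide (class $i$ may itself hold positive frozen credit behind a closed gate). This is precisely what the paper's construction buys and why it differs from yours: it anchors $s$ at the \emph{last zero of} $V_i$, which pins $V_i(s)=0$ and hence $V_i(t_0)\le I_i\bar{L}_i/c$ after the residual lower-priority packet, and it never asserts anything about the individual credits of classes $j<i$; instead it invokes the previously proven aggregate bound \eqref{eq:sum_up_bound} at $t_0$ --- which tolerates positive individual credits --- together with the lower bounds of \eqref{eq:sum_low_bound} at $t_n$, and the same algebra then absorbs the slack exactly into the $c\bar{L}_i$ term. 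Your argument is sound in the restricted gate-free setting (all gates always open, credits frozen during CDT), where ineligibility at $s$ really does imply empty-or-negative credit by R1 and R5; but to cover the stated model you essentially have to adopt the paper's anchoring and its use of \eqref{eq:sum_up_bound}.
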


\begin{proof}
Consider a time $t \geq 0$ and define time instant $s = \sup\{u\in[0,t]: V_i(u)=0\}$. Based on the definition of $s$: 
First, $V_i(u) \neq 0,~\forall u \in (s,t]$. This implies no credit reset during $[s,t]$, i.e., $V_i(\cdot)$ is continuous during this interval. Second, CDT either finishes a transmission at $s$ or is not transmitting at $s$. Indeed, otherwise, since $V_i(s)=0$,  and the credit of $i$ is frozen during the transmission of CDT, it would be true that $V_i(s^+)=0$ and thus, $s \neq \sup\{u\in[0,t]: V_i(u)=0\}$.
	
If $V_i(t)=0$, then $s=t$ and the result is straightforward. 
Since $V_i()$ is continuous in $[s,t]$, and $V_i(u) \neq 0, \forall u \in (s,t]$, then, either $\forall u\in(s,t]: V_i(u) > 0$ or $\forall u\in(s,t]: V_i(u) < 0$. If $V_i(t)<0$, then $\forall u\in(s,t]: V_i(u) < 0$ and the theorem is straightforward to show. 
As a result, the rest of the proof focuses on the case $V_i(t)>0$, i.e., $\forall u\in(s,t]: V_i(u) > 0$. The class $i$ cannot start a transmission at time $s$, otherwise, by rule R2 since $V_i(s)=0$, its credit would decrease to negative values, which contradicts our assumption that $\forall u\in(s,t]: V_i(u) > 0$. Note that since the credit of class $i$ is positive in $(s,t]$, its backlog is also positive in $(s,t]$.

Since $\forall u\in(s,t]: V_i(u) > 0$ and due to rule R1, a class with lower priority than $i$ cannot start a transmission in $(s,t]$. However, in order to consider non-preemptive AVB and BE classes, we should account for the case that a lower priority class has initiated a transmission the latest on $s$ and
is still transmitting at $s$. To do so, we define the time instant $t_0$, with $s\leq t_0 $, as the end of the transmission of the residual of a lower priority packet after time $s$. The latter is denoted by $l^{LO} \leq \bar{L}_i$. If there is no transmission of a lower priority packet, then $l^{LO}=0$. If CDT is preemptive then the transmission of $l^{LO}$ may be interrupted and re-continued. Let $d^{L0}$ be the aggregated time period that the credit is frozen within $[s,t_0]$. Then, $t_0= s+d^{L0}+\frac{l^{LO}}{c}$ and we assume $t_0 \leq t$.

The interval $[t_0,t]$ can be split into a sequence of sub-intervals during which class $i$ alternates between non-transmission and transmission.  Let $[t_0,t_1], [t_1,t_2],...,[t_{n-1}, t_{n}]$ be such a sequence, with $t_0 \leq t_1 < ... < t_n = t$. We allow $t_0=t_1$ as this makes it possible to assume that class $i$ does not transmit in the first interval $[t_0, t_1]$  (i.e., if class $i$ starts transmission at time $t_0$ we set $t_1=t_0$). It follows that for the
even intervals $[t_k,t_{k+1}]$, with $k\in\{0,2,4,...,2\floor{\frac{n}{2}}\}$, we have $\frac{d}{dt} V_i(u) \geq 0$, $\forall u\in(t_k,t_{k+1})$. Indeed, during non transmission, the credit either increases or remains constant, by rules R3 and R4. Conversely, for the odd  intervals $[t_k,t_{k+1}]$, with  $k\in\{1,3,5,...,2\ceil{\frac{n}{2}}-1\}$, we have $\frac{d}{dt} V_i(u) < 0$, $\forall u\in(t_k,t_{k+1})$. 

%

	
 Let us define as $d_k$ the aggregated time period that the credit is frozen within the even interval $[t_k,t_{k+1}]$. Next, we study the credit variation for all classes, starting with the interval $[s,t_0]$, then following with even and odd intervals in $(t_0,t]$. In $[s,t_0]$: 
\begin{itemize}
   \item Each class $j<i$ gains credit if it has backlog or negative credit (rule R3), except if CDT transmits, i.e.,
	\begin{align}
		V_j(t_0)-V_j(s)\leq I_j(t_0-s).
	\end{align}
	By summing up for all $j < i$, we have:
	\begin{align}\label{eq:credit_j_s_t_0}
	\sum_{j=1}^{i-1}\Big(V_j(t_0)-V_j(s)\Big)\leq  \sum_{j=1}^{i-1}I_j (t_0-s).
	\end{align}
\item Class $i$ gains credit because it has backlog as explained above, except if CDT transmits
	\begin{align}
	V_i(t_0)-V_i(s)=I_i(t_0-s) -I_i d^{LO},
	\end{align}
	and since $V_i(s) = 0$ and $t_0= s+d^{LO}+\frac{l^{LO}}{c} $, we get:
	\begin{equation}\label{eq:credit_i_s_t_0}
	V_i(t_0)=I_i(t_0-s-d^{LO}) \leq I_i \frac{l^{LO}}{c}\leq I_i \frac{\bar{L}_i}{c}.
	\end{equation}
\end{itemize}
	 For the odd intervals, $[t_{2k-1},t_{2k}],~ (1 \leq k \leq \floor{\frac{n}{2}})$, we have:\begin{itemize}
   \item
	  Since the credit of class $i$ reduces, the higher priority classes do not transmit within $[t_{2k-1},t_{2k}]$ and $\forall j<i: V_j(t_{2k-1})\leq 0$. They gain credit if they have positive backlog or negative credit, i.e.,
	\begin{align}\label{eq:credit_j_odd}
		V_j(t_{2k})-V_j(t_{2k-1}) \leq I_j(t_{2k}-t_{2k-1}).
	\end{align}
Summing them up for all $j < i$:
\begin{align}
\sum_{j=1}^{i-1}\Big(V_j(t_{2k})-V_j(t_{2k-1})\Big)\leq  \sum_{j=1}^{i-1}I_j (t_{2k}-t_{2k-1}).
\end{align}\label{eq:credit_i_odd}
	\item The credit of class $i$ reduces due to transmission (R2):
	\begin{align}
		V_i(t_{2k})-V_i(t_{2k-1})=S_i(t_{2k}-t_{2k-1}).
	\end{align}
\end{itemize}
	 In even intervals $[t_{2k},t_{2k+1}],~ (0 \leq k \leq \floor{\frac{n-1}{2}})$, we have:
\begin{itemize}
   \item
   There should exist an AVB class $j<i$ that transmits or all AVB and BE classes wait for CDT (for an aggregated time $d_{2k}$). 
	Define $a_{j,{2k}}$ as the aggregated period of time that class $j$ transmits packets in $[t_{2k},t_{2k+1}]$. Then, by using that $I_j - S_j = c$, we obtain,
	\begin{align}
	V_j(t_{2k+1})-V_j(t_{2k}) \leq  I_j (t_{2k+1}-t_{2k})- ca_{j,2k}-I_j d_{2k}.
	\end{align}
	Summing up for all $j < i$, and considering that $t_{2k+1} - t_{2k} = d_{2k}+ \sum_{j=1}^{i-1}a_{j,{2k}}$, we obtain, 
	\begin{align}\nonumber
	&\sum_{j=1}^{i-1}\Big(V_j(t_{2k+1})-V_j(t_{2k})\Big)  \\  
	& \leq- \Big(c -\sum_{j=1}^{i-1}I_j\Big) (t_{2k+1}-t_{2k})+ (c-\sum_{j=1}^{i-1}I_j) d_{2k}.
	\end{align}
	\item 
	The credit of class $i$ increases or is frozen for an aggregated time $d_{2k}$, i.e.,
	\begin{align}
		V_i(t_{2k+1})-V_i(t_{2k})=I_i(t_{2k+1}-t_{2k})- I_i d_{2k}.
	\end{align}
\end{itemize}	
	Next, we study the credit variation within $[t_0,t_n]$. First we assume that $n$ is odd. 
	By summing up the credit variations for all intervals and all classes $j <i$, we have:
	\begin{align}\nonumber
	&\sum_{j=1}^{i-1}\Bigg[\Big(V_j(t_1)-V_j(t_0)\Big)+\Big(V_j(t_{2})-V_j(t_{1})\Big)+... \\ \ \nonumber
	&+ \Big(V_j(t_{n-1})-V_j(t_{n-2})\Big)+ \Big(V_j(t_{n})-V_j(t_{n-1})\Big)\Bigg]\\\nonumber
	\leq &- \Big(c -\sum_{j=1}^{i-1}I_j\Big) (t_1 - t_0)+...+\sum_{j=1}^{i-1}I_j (t_{n-1}-t_{n-2})\\ 
	& - \Big(c -\sum_{j=1}^{i-1}I_j\Big) (t_n - t_{n-1}) +\Big(c -\sum_{j=1}^{i-1}I_j\Big) \sum_{k=0}^{k=\lfloor \frac{n}{2} \rfloor} d_{2k}.
	\end{align}
	Therefore, by setting $\alpha = (t_{2}-t_{1})+(t_{4}-t_{3})+...+(t_{n-1}-t_{n-2})$ and $\Delta t=t_n - t_0 -\sum_{k=0}^{k=\lfloor \frac{n}{2} \rfloor}  d_{2k}$, we can write,
	\begin{align}
	 \sum_{j=1}^{i-1} \Big(V_j&(t_n)-V_j(t_0)\Big) \leq
	-(c-\sum_{j=1}^{i-1}I_j) \Delta t+c \alpha.\label{eq:lp_main_con_gen}
		\end{align}
	Next, by summing up the credit variations for all intervals for class $i$ and considering $S_i = I_i-c$,
	\begin{align}
	&V_i(t_{n}) -V_i(t_0) = I_i(t_1-t_0)+(I_i-c)(t_{2}-t_{1})+... \nonumber \\
	&+(I_i-c)(t_{n-1}-t_{n-2}) +I_i(t_{n}-t_{n-1})-I_i \sum_{k=0}^{k=\lfloor \frac{n}{2} \rfloor}  d_{2k}\nonumber\\
	&=I_i\Delta t-c \alpha. \label{eq:lp_main_obj_gen_sim}
	\end{align}
	By Eq. \eqref{eq:sum_up_bound}, we obtain
	\begin{align}
	\sum_{j=1}^{i-1}V_j(t_0) &\leq 
	-V_i(t_0)+\frac{\bar{L}_i}{c}I_i+\frac{\bar{L}_i}{c} \sum_{j=1}^{i-1}I_j.\label{eq:Vjlbound}
	\end{align}
	We lower bound the left hand-side of Eq. \eqref{eq:lp_main_con_gen} using the lower bound of Eq. \eqref{eq:sum_low_bound} and the bound of Eq. \eqref{eq:Vjlbound}; therefore,
	\begin{align} \label{eq:lp_main_con_gen_sim}
	&V_i(t_0)-K \leq c \alpha -(c- \sum_{j=1}^{i-1}I_j) \Delta t,
	\end{align}
	where $K = -\sum_{j=1}^{i-1} L_j \frac{S_j}{c}+\frac{\bar{L}_i}{c}I_i+ \frac{\bar{L}_i}{c}\sum_{j=1}^{i-1}I_j \geq 0$. 
	
	\noindent Eq. \eqref{eq:lp_main_con_gen_sim} gives an upper bound on $\Delta t$, i.e.,
	\begin{align} \label{eq:upper_bound_dt}
	\Delta t \leq \frac{c \alpha+K-V_i(t_0)}{c- \sum_{j=1}^{i-1}I_j}.
	\end{align}
	\noindent By using Eq. \eqref{eq:upper_bound_dt} in Eq. \eqref{eq:lp_main_obj_gen_sim}, we obtain:
	\begin{align}
	V_i(t_n)& \leq I_i\Bigg(\frac{c \alpha+K-V_i(t_0)}{c- \sum_{j=1}^{i-1}I_j} \Bigg)-c \alpha +V_i(t_0) \nonumber\\
	&=I_i\Bigg(\frac{K}{c- \sum_{j=1}^{i-1}I_j}\Bigg) +V_i(t_0)\Bigg(\frac{c- \sum_{j=1}^{i-1}I_j-I_i}{c- \sum_{j=1}^{i-1}I_j}\Bigg) \nonumber \\&-c\alpha\Bigg(\frac{c- \sum_{j=1}^{i-1}I_j-I_i}{c- \sum_{j=1}^{i-1}I_j}\Bigg).
	\end{align}
	Next, considering $\sum_{j=1}^{i}I_j < c$, and since by Eq. \eqref{eq:credit_i_s_t_0} $V_i(t_0) \leq \frac{\bar{L}_i}{c}I_i$, we obtain:
	\begin{align}
	V_i(t_n) \leq
	  \frac{I_i}{c\Big(c- \sum_{j=1}^{i-1}I_j\Big)}\Bigg(cK+\bar{L}_i\Big(c- \sum_{j=1}^{i-1}I_j-I_i\Big)\Bigg).
	\end{align}
	By replacing the value of $K$, the credit of class $i$ at time $t_n$, where $n$ is odd, is upper bounded by $V_i^{\mathrm{max}}$ given in the statement. %
If $n$ is even, then:
	\begin{equation}
	V_i(t_n) = V_i(t_{n-1}) + S_i(t_n-t_{n-1}).
	\end{equation}
	Since $S_i(t_n-t_{n-1}) \leq 0$, it is true that $V_i(t_n) \leq V_i(t_{n-1})$.
	As $n$ is even, $n-1$ is odd. We have already found a bound for $t_k$ when $k$ is odd, which is $V_i^{\mathrm{max}}$. 
	Since $t=t_n$, and $n$ is either odd or even, then $V_i(t) \leq V_i^{\mathrm{max}}, ~\forall t\geq 0$ and this completes the proof.
	\end{proof}
\begin{proposition}
The credit bound, $V_i^{\max}$, given in \eqref{eq:defcreditUp} is tight for the two highest priority classes, i.e., for each set of parameter values and each class 1,2, there is a scenario for which the credit counter attains the bound.
\end{proposition}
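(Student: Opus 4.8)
The plan is to prove tightness constructively: for each of the two top classes I would exhibit an explicit admissible execution of the scheduler (a gate schedule, a CDT/BE/AVB arrival pattern, and a packet-length assignment consistent with the maximum lengths) for which the credit of class $i$ actually reaches $V_i^{\mathrm{max}}$. The guiding principle is to reverse-engineer the proof of Theorem~\ref{thm:bound} and force every inequality it uses to hold with equality. Inspecting that proof, the binding relations are: (i) $V_i(t_0)=\frac{\bar{L}_i}{c}I_i$, i.e. \eqref{eq:credit_i_s_t_0} is tight, which needs $d^{LO}=0$ and a residual lower-priority packet of the full length $l^{LO}=\bar{L}_i$; (ii) $\alpha=0$, i.e. class $i$ never transmits in $(t_0,t]$; (iii) \eqref{eq:sum_up_bound} tight at $t_0$; and (iv) the lower bound \eqref{eq:sum_low_bound} tight at $t_n$ for every $j<i$, meaning every higher-priority class sits at its minimum credit $L_jS_j/c$ at the final time. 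I would take all gates permanently open and inject no CDT, so that R4 never freezes any credit and the only blocking comes from non-preemptive in-progress packets.

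For class~$1$ the construction is immediate, since there is no higher-priority AVB class and $V_1^{\mathrm{max}}=I_1\bar{L}_1/c$. At the reset instant $s$ let class~$1$ be empty with zero credit; at $s$ a lower-priority (BE or class $>1$) packet of length $\bar{L}_1$ starts transmission and a burst of class-$1$ traffic arrives. By R1 class~$1$ cannot preempt the in-progress packet, so by R3 it accrues credit at rate $I_1$ throughout $[s,t_0]$ with $t_0=s+\bar{L}_1/c$; hence $V_1(t_0)=I_1\bar{L}_1/c=V_1^{\mathrm{max}}$, and the bound is attained.

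For class~$2$ I would use a two-phase scenario. In phase~A, over $[s,t_0]$, a lower-priority packet of length $\bar{L}_2$ is transmitted while both class~$1$ and class~$2$ are backlogged and start from zero credit; by R3 they accrue at $I_1$ and $I_2$ respectively, so $V_1(t_0)=I_1\bar{L}_2/c$ and $V_2(t_0)=I_2\bar{L}_2/c$. This simultaneously makes \eqref{eq:credit_i_s_t_0} tight and saturates \eqref{eq:sum_up_bound}. In phase~B, over $[t_0,t_n]$, class~$1$ transmits back-to-back (its large backlog keeps it eligible), which by R1 blocks class~$2$ so that $\alpha=0$, while by R3 class~$2$ accrues at $I_2$ without interruption. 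Class~$1$'s credit descends at the send slope $S_1$ from $I_1\bar{L}_2/c$ down to its minimum $L_1S_1/c$, the descent ending with a full packet of length $L_1$ that begins exactly at credit $0$. Using $-S_1=c-I_1$, a direct computation then gives
\[
V_2(t_n)=\frac{I_2\bar{L}_2}{c}+I_2\frac{I_1\bar{L}_2/c-L_1S_1/c}{-S_1}=\frac{I_2\big(c\bar{L}_2-S_1L_1\big)}{c(c-I_1)}=V_2^{\mathrm{max}},
\]
so the bound is attained.

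The main obstacle is realizability, i.e. checking that rules R1--R5 actually force this exact trajectory and that phase~B terminates with class~$1$'s credit at precisely $L_1S_1/c$. For the latter I would avoid any integrality issue by allowing the earlier class-$1$ packets to have lengths in $(0,L_1]$ chosen so that class~$1$'s credit reaches exactly $0$ just before the final full-length $L_1$ packet; since the pre-final transmission only needs to consume a credit amount $I_1\bar{L}_2/c$ at the fixed rate $-S_1$, its total duration is a free parameter and can be matched exactly. It then remains to verify the scheduling invariants along the way: that class~$2$ stays backlogged and strictly below class~$1$ in eligibility throughout phase~B (so that neither a class-$2$ transmission nor a reset occurs), that class~$1$ indeed begins each packet with nonnegative credit (its credit decreases monotonically from $I_1\bar{L}_2/c$ to $0$ before the last packet), and that no credit reset is triggered for either class. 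Assembling these checks confirms that every inequality in the proof of Theorem~\ref{thm:bound} is met with equality, which is exactly the tightness claim for classes $1$ and $2$.
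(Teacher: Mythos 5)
Your construction is correct and is essentially the paper's own proof: you use the same worst-case scenarios (a lower-priority packet of full length $\bar{L}_i$ seizing the line just as class $i$ becomes backlogged, and, for class 2, class 1 then transmitting back-to-back so its credit descends from $I_1\bar{L}_2/c$ through $0$ followed by one full-length $L_1$ packet), yielding the same computation $V_2(t_n)=\frac{I_2}{c(c-I_1)}\big(c\bar{L}_2-S_1L_1\big)=V_2^{\mathrm{max}}$. Your extra care about realizability (choosing the intermediate class-1 packet lengths in $(0,L_1]$ so that its credit reaches exactly $0$ at a packet boundary before the final $L_1$ packet) is a detail the paper's proof glosses over, but it does not change the approach.
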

\begin{proof}
The credit of class 1 achieves the value $V_1^{\max}$ in the following scenario. 
Assume that all queues have zero backlog. Just before the backlog of class 1 becomes positive, there is an arrival of a lower priority class packet with length $\bar{L}_1$. This packet starts being transmitted according to R1 (assuming the gate is open for it). It takes $\frac{\bar{L}_1}{c}$ to transmit the lower priority packet. 
During the transmission, since class 1 has positive backlog it gains credit according to the rule R3. At the end of the transmission, the credit of class 1 becomes $I_1 \frac{\bar{L}_1}{c}$ i.e., equal to $V_1^{\max}$ (Eq. \eqref{eq:defcreditUp}).
	
The tightness scenario for class 2 is as follows. Assume that all queues for all classes have zero backlog. 
Just before the backlog of class 2 becomes positive, there is an arrival of a lower priority class packet with length $\bar{L}_2$. This packet starts being transmitted (rule R1) since at this moment there is no packet of class 1, 2 and the gate is open. Just after the transmission, the backlog of class 1 becomes also positive. 
Due to positive backlog, the classes 1,2 gain credit according to the rule R3. At the end of transmission of the lower priority packet, the credit values of classes 1 and 2 are 
$I_1 \frac{\bar{L}_2}{c}$ and $I_2 \frac{\bar{L}_2}{c}$, respectively. Then, class 1 starts transmission for a time interval $\frac{I_1}{-S_1} \frac{\bar{L}_2}{c}$ until its credit becomes zero. During the latter transmission, class 2 gains credit of $I_2\frac{I_1}{-S_1} \frac{\bar{L}_2}{c}$. 
When the credit value of class 1 is zero, it transmits a packet with maximum length $L_1$ for a time interval $\frac{L_1}{c}$, during which class 2 gains credit equal to $I_2\frac{L_1}{c}$.
The total credit gained by class 2 is $I_2 \frac{\bar{L}_2}{c} + I_2\frac{I_1}{-S_1} \frac{\bar{L}_2}{c} + I_2\frac{L_1}{c} = \frac{I_2}{c (c-I_1)}(c\bar{L}_2 - S_1L_1)$, which is equal to Eq. \eqref{eq:defcreditUp}.
\end{proof}
We now formally compare $V_i^{\max}$ with the existing J- and H-bounds and show that our bounds improve on all existing bounds.

\begin{proposition}
\begin{enumerate}
  \item $V_1^{\mathrm{max}} =  V_1^{\mathrm{max},\mathrm{J}}=V_1^{\mathrm{max},\mathrm{H}}$.
  \item  $V_2^{\mathrm{max}} \leq V_2^{\mathrm{max},\mathrm{J}}$ and the inequality is strict if $L_2>L_{BE}$.
  \item For $j=2,...p$, $V_j^{\mathrm{max}} < V_j^{\mathrm{max},\mathrm{H}}$.
\end{enumerate}
\end{proposition}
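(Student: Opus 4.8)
The plan is to treat all three parts as direct algebraic comparisons of the closed-form expressions, exploiting only two structural facts: the CBS relation $I_j-S_j=c$ (equivalently $c-\sum_{j=1}^{i-1}I_j>0$, and, for $i=2$, $c-I_1=-S_1$) and the admissibility condition $\sum_{j=1}^{p}I_j<c$, which yields $\sum_{j=1}^{i}I_j<c$ for every $i$. I would first record the specializations of $\bar L_i$ forced when $p=2$: since the only class below class $2$ is BE, $\bar L_2=L_{BE}$, while $\bar L_1=\max(L_2,L_{BE})=\bar L$.

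For part (1), I would substitute $i=1$ into \eqref{eq:defcreditUp}; the empty sums make the denominator $c\cdot c$ and the numerator $cI_1\bar L_1$, so $V_1^{\mathrm{max}}=\bar L_1 I_1/c$. With $\bar L_1=\bar L$ this is literally $V_1^{\mathrm{max},\mathrm J}$ from \eqref{eq:boyer_avb_A}, and it also equals $V_1^{\mathrm{max},\mathrm H}$ from \eqref{eq:sum_low_bound} (whose $i=1$ correction term $-\sum_{j=1}^{0}S_jL_j/c$ vanishes); hence all three coincide.

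For part (2), I would cancel the common positive factor $I_2/c$ from $V_2^{\mathrm{max}}$ and $V_2^{\mathrm{max},\mathrm J}$ and use $c-I_1=-S_1$ to rewrite the denominator of $V_2^{\mathrm{max}}$. After this substitution $V_2^{\mathrm{max}}$ collapses to $\tfrac{I_2}{c}\big(L_1+cL_{BE}/(-S_1)\big)$, and the inequality $V_2^{\mathrm{max}}\le V_2^{\mathrm{max},\mathrm J}$ reduces, after clearing $-S_1>0$ and using $c+S_1=I_1$, to the single inequality $L_{BE}\le\max(L_2,L_{BE})$. This always holds, with equality exactly when $L_2\le L_{BE}$; hence the inequality is strict precisely when $L_2>L_{BE}$, as claimed.

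Part (3) is the substantive case and the one I expect to require the most care. Writing $\sigma=\sum_{k=1}^{j-1}I_k$ and $T=\sum_{k=1}^{j-1}S_kL_k\le 0$, both $V_j^{\mathrm{max}}$ and $V_j^{\mathrm{max},\mathrm H}$ are affine in $\bar L_j$ and $T$ and share the same $\bar L_j$, so I would clear the positive denominators $c$ and $c-\sigma$ (positivity from admissibility) and collect terms. The goal is to show the cross-multiplied difference (new-bound side minus H-bound side) factors as $\delta\,(\bar L_j\,\sigma-T)$ with $\delta=\sigma+I_j-c$; the needed inequality is then $\delta(\bar L_j\sigma-T)<0$. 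The main obstacle is purely the bookkeeping of this factorization, namely verifying that the $\bar L_j$-coefficients combine to $\bar L_j\sigma\delta$ and the $T$-coefficients to $-T\delta$. Once that is done the conclusion is immediate: admissibility gives $\delta=\sum_{k=1}^{j}I_k-c<0$, while $j\ge 2$ forces $\sigma\ge I_1>0$ and $-T\ge 0$, so $\bar L_j\sigma-T>0$; the product of a strictly negative number with a strictly positive one is strictly negative, giving $V_j^{\mathrm{max}}<V_j^{\mathrm{max},\mathrm H}$.
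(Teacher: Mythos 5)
Your proposal is correct and takes essentially the same route as the paper: part (1) by direct substitution, part (2) by using $c-I_1=-S_1$ (equivalently $c+S_1=I_1$) to reduce the comparison to $L_{BE}\leq \max(L_2,L_{BE})$, which is exactly the paper's observation that $V_2^{\mathrm{max}}$ is the J-bound with $\bar{L}$ replaced by $L_{BE}$, and part (3) by factoring the difference of the two bounds as a product of $\big(\sum_{k=1}^{j}I_k-c\big)$ and $\big(\bar{L}_j\sum_{k=1}^{j-1}I_k-\sum_{k=1}^{j-1}S_kL_k\big)$, whose signs follow from admissibility, $I_k>0$, $S_k<0$ and $j\geq 2$. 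The only cosmetic difference is that you compute the factorization from the ``new minus H'' side, whereas the paper states it as $V_j^{\mathrm{max},\mathrm{H}}-V_j^{\mathrm{max}}$ in Eq.~\eqref{eq:bounddif1}; the content is identical.
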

\begin{proof}1) is straightforward. For 2), observe that $V_2^{\mathrm{max}}$ can be obtained by replacing $\bar{L}$ with $L_{BE}$ in $V_2^{\mathrm{max},\mathrm{J}}$. For 3), after some algebra we find:
\begin{equation}  
V_i^{\mathrm{max},\mathrm{H}} - V_i^{\mathrm{max}}
= \frac{c-\sum_{j=1}^{i}I_j}{c\Big(c-\sum_{j=1}^{i-1}I_j\Big)} \Big(\bar{L}_i\sum_{j=1}^{i-1}I_j-\sum_{j=1}^{i-1} S_jL_j\Big).\label{eq:bounddif1}
\end{equation}

By hypothesis, $c > \sum_{j=1}^{i}I_j$. Since $I_j>0$, $S_j<0$ and $i\geq 2$, it follows that  $\bar{L}_i\sum_{j=1}^{i-1}I_j-\sum_{j=1}^{i-1} S_jL^j > 0$, thus the last term of Eq. \eqref{eq:bounddif1} is strictly positive. 
\end{proof}

\section{Numerical Evaluation}
\label{sec:eval}
Consider a TSN scheduler with one CDT, three AVB and one BE classes, that is connected to a link with line rate $c=100$ Mbps. Assume $I_1$, $I_2$, and $I_3$ are $50\%$, $15\%$, and $10\%$ of the link rate, and for any AVB class $i$, $S_i=I_i - c$. Also, 
$L_1 = 0.2$KB, $L_2 = 1.5$KB, and $L_3=0.5$KB; 
$L_{BE}=1$KB.
The CDT is is constrained by an affine arrival curve $a(t)=r t+b$\cite{le_boudec_network_2001} with parameters $r=12.8$Kbps and $b=1.6$Kb.

The credit upper bounds for the three AVB classes computed by J-bounds \eqref{eq:boyer_avb_A}, H-bounds \eqref{eq:sum_low_bound}, and Theorem \ref{thm:bound} are shown in Table \ref{table:credit_num}.
As we know, the bounds coincide for class 1. In contrast, for class 2, our new credit bound is less than the J-bound by $18.5\%$\footnote{As aforementioned, J-bounds exist only for the case $p=2$. Thus, in the computation of the J-bound, class $3$ is treated as a BE class.} and than the H-bound by $56\%$. For class 3 the bound by Theorem 1 is less than the H-bound by $68.1\%$, while the J-bound does not exist for class 3.
\begin{table}[]
	\centering
	\caption{Credit upper bounds of three AVB classes obtained by \cite{azua_complete_2014} ($V_i^{\mathrm{max},\mathrm{J}}$), \cite{daigmorte_modelling_2018} ($V_i^{\mathrm{max},\mathrm{H}}$), and Theorem  \ref{thm:bound} ($V_i^{\mathrm{max}}$).}
	\begin{tabular}{|c|c|c|c|l}
		\cline{1-4}
		& $i=1$ & $i=2$ & $i=3$ &  \\ \cline{1-4}
	$V_i^{\mathrm{max}}$ (Kb)& $6$    & $2.64$  &    $5.43$     &  \\ \cline{1-4}
	$V_i^{\mathrm{max},J}$ (Kb)  & $6$     & $3.24$  &   -      &  \\ \cline{1-4}
	$V_i^{\mathrm{max},H}$ (Kb) & $6$     & $6$     &   $17$      &  \\ \cline{1-4}
	\end{tabular}
\label{table:credit_num}
\end{table}


As discussed, the credit upper bound has an impact on the latency bound of a FIFO system and subsequently of the end-to-end latency, as shown in \cite{mohammadpour_latency_2018}. This can be seen by the improvement in the latency term of the service curves provided to the AVB classes, analyzed below. 
According to Eq. (22) of \cite{TSN_TR} (that is the companion paper of \cite{mohammadpour_latency_2018}), a service curve for the AVB class $i$ is:
\begin{align}
\beta_i(t) =\frac{(c-r) I_i}{I_i - S_i} \Big[t -\frac{c V_i^{\mathrm{M}}}{(c-r)I_i} + \frac{b+\frac{rL^{\text{N}}}{c}}{c-r}\Big]^+,
\end{align}
where $V_i^{\mathrm{M}}$ is a credit bound for class $i$ and $L^{\text{N}}$ is the maximum packet length of all classes except CDT. 
For class 2, the service curve latency in microseconds is $192.02$, $232.02$ and $416.05$ if computed with $V_2^{\mathrm{max}}$, $V_2^{\mathrm{max, \mathrm{J}}}$ and $V_2^{\mathrm{max,H}}$, respectively. Thus, Theorem 1 improves the service curve latency of class 2 by $17\%$ compared with \cite{azua_complete_2014} and by $53.8\%$ compared with \cite{daigmorte_modelling_2018}. For class 3, the service curve latency in microseconds is $558.93$ and $1716.22$ if computed with $V_3^{\mathrm{max}}$ and $V_3^{\mathrm{max,H}}$, respectively. Thus, Theorem~1 improves the service curve latency of class 3 by $67.4\%$ compared with \cite{daigmorte_modelling_2018}.



\bibliographystyle{IEEEtran}
\bibliography{ref}


\end{document}